\newcolumntype{d}{D{.}{.}{-1}}
\newcommand{\ud}{\mathrm{d}}
\newcommand{\x}{{\bm{x}}}
\newcommand{\f}{{\mathsf{f}}}
\newcommand{\cA}{\mathcal{A}}
\newcommand{\cG}{\mathcal{G}}
\newcommand{\cI}{\mathcal{I}}
\newcommand{\cJ}{\mathcal{J}}
\newcommand{\cL}{\mathcal{L}}
\newcommand{\cP}{\mathcal{P}}
\newcommand{\cR}{\mathcal{R}}
\newcommand{\cS}{\mathcal{S}}
\newcommand{\cT}{\mathcal{T}}
\newcommand{\cU}{\mathcal{U}}
\newcommand{\mT}{T}
\newtheorem{proposition}{Proposition}
\newtheorem{definition}{Definition}
\newtheorem{problem}{Problem}
\renewcommand{\t}{^{\mbox{\scriptsize \mT}}}
\title{\textbf{Decentralized Game-Theoretic Control for Dynamic Task Allocation Problems for Multi-Agent Systems}}
\author{Efstathios Bakolas \and Yoonjae Lee \thanks{This work was supported in part by ARL under W911NF2020085. E. Bakolas (Associate Professor) and Y. Lee (graduate student) are with the Department of Aerospace Engineering
and Engineering Mechanics, The University of Texas at Austin,
Austin, Texas 78712-1221, USA, Emails: bakolas@austin.utexas.edu; yol033@utexas.edu}}
\begin{document}

\maketitle

\begin{abstract}
We propose a decentralized game-theoretic framework for dynamic task allocation problems for multi-agent systems. In our problem formulation, the agents' utilities depend on both the rewards and the costs associated with the successful completion of the tasks assigned to them. The rewards reflect how likely is for the agents to accomplish their assigned tasks whereas the costs reflect the effort needed to complete these tasks (this effort is determined by the solution of corresponding optimal control problems). The task allocation problem considered herein corresponds to a dynamic game whose solution depends on the states of the agents in contrast with classic static (or single-act) game formulations. We propose a greedy solution approach in which the agents negotiate with each other to find a mutually agreeable (or individually rational) task assignment profile based on evaluations of the task utilities that reflect their current states. We illustrate the main ideas of this work by means of extensive numerical simulations.
\end{abstract}


\section{Introduction}\label{s:intro}

We consider a dynamic task allocation problem for a multi-agent system whose agents have continuous state and input spaces and have to complete a set of spatially distributed tasks (obtain in-situ measurements or pick up packages from different locations over a given spatial domain). We adopt a game-theoretic approach which seeks for task assignments that maximize the individual utility of each agent conditional on the assignments of their teammates (individual rationality principle) while also ensuring that the self-interests of the agents are aligned with those of the team. To this aim, we design the agents' utilities in accordance with the concept of \textit{wonderful life utility}~\cite{p:wolpert1999} (WLU) which allows us to associate the dynamic task allocation problem with a sequence of potential games~\cite{p:monderer1996potential}. We propose a greedy decentralized algorithm for the computation of task assignment profiles which are mutually agreeable in the long run.

\noindent \textit{Literature review:} Task allocation problems for multi-agent systems can be addressed by auction based techniques, distributed and / or multi-objective optimization and game-theoretic methods. The auction-based techniques are centralized when the agents negotiate with each other under the guidance of an auctioneer~\cite{p:gerkey2002} and decentralized when they negotiate directly with each other~\cite{p:howauction2009,p:NANJANATH2010,p:capitan2013}. Centralized methods rely on a single point of failure whereas the communication cost in decentralized methods can be substantial if not prohibitive. Distributed optimization~\cite{p:distrmaxsum} 
and multi-objective optimization~\cite{p:tolmidis2013} for task allocation problems are typically quite complex and require some knowledge about the utilities of the other agents and, more importantly, do not necessarily yield solutions which are mutually agreeable. As suggested in \cite{p:shamma2007}, game-theoretic tools constitute one of the most natural approaches to task allocation problems for intelligent, autonomous agents. Reference~\cite{p:shamma2007}, which is the main inspiration of this paper, utilizes the framework of potential games to define in a systematic way the task and agent utilities as well as several negotiation protocols (game-theoretic learning algorithms~\cite{p:fudenberg1998,b:young2020}) for the computation of mutually agreeable task assignments in a decentralized or distributed way. The negotiation protocols utilized in \cite{p:shamma2007} converge to mutually agreeable task assignment profiles without requiring that any agent should know the utility functions of her teammates (decentralized task allocation). However, their convergence is conditional on the game remaining the same (e.g., the functional description of the utilities does not change throughout the negotiation process). Thus, although the equilibrium of the game is found iteratively in \cite{p:shamma2007}, the task allocation problem itself is essentially modeled as a static game. Extensions of the game-theoretic framework for multi-agent control problems can be found in \cite{p:marden2009,p:chasparis2011,p:qpotential2014}. 
Ref. \cite{p:chapman2010} proposes a myopic solution approach to a dynamic task allocation problem modeled as a sequence of static (single-act) potential games. The approach in \cite{p:chapman2010} cannot handle state-dependent utilities in general. Finally, the framework of state-dependent potential games~\cite{p:MARDEN2012} is only applicable to problems with finite (discrete) state spaces.


\textit{Contributions:} In this paper, we address a dynamic task allocation problem in which the task utilities depend on both the rewards earned by the agents for completing their assigned tasks as well as the costs they incur while doing so (cost-to-go functions of corresponding optimal control problems). Consequently, the utilities are in general state-dependent. We adopt a decentralized game-theoretic solution approach (each agent knows only her own utility function). The (individual) agent utilities are designed in accordance with the WLU framework which ensures that their self-interests are aligned with the team's interests under the framework of potential games. We propose a greedy solution approach in which the negotiations between the agents take place on-the-fly while the agents move in their state space towards their assigned tasks. Every time an agent changes her individual assignment (and thus her final state destination) she has to update the estimate of the cost-to-go and consequently her utility function as well (state-dependent utilities). We design the negotiation process such that the agents compute a mutually agreeable profile which is not likely to change during the last phase of the process. 

\textit{Outline:} The rest of the paper is organized as follows. In Section~\ref{s:prelim}, we discuss the problem preliminaries. The task, team and agent utilities are defined in Section~\ref{s:utility}. The open-loop task allocation is addressed in Section~\ref{s:OLTA} and the dynamic problem in Section~\ref{s:RHTA}. Numerical simulations are presented in Section~\ref{s:simu}. Finally, Section~\ref{s:concl} presents concluding remarks and directions for future work.

\section{Preliminaries and Problem Setup}\label{s:prelim}

\textit{Notation:} We denote by $\mathbb{R}^n$ the set of $n$-dimensional real vectors. We denote by $\mathbb{Z}$ the set of integers. Given $a,b \in \mathbb{Z}$ with $a \leq b$, we denote by $[a,b]_d$ the discrete time interval from $a$ to $b$, that is, $[a,b]_d := [a,b]  \cap  \mathbb{Z}$. Given $k \in \mathbb{Z}$, we write $\mathbb{Z}_k$ to denote the (unbounded) discrete interval $[k,\infty) \cap \mathbb{Z}$. Given a finite set $\cA$, we denote by $\mathrm{card}(\cA)$ its cardinality.


\textit{Problem setup:} We consider a multi-agent system (MAS) comprised of $n$ agents. We denote by $x_i \in \cS_i \in \Sigma$ and $u_i \in \cU_i$, for $i \in [1,n]_d$, the state and input of the $i$-th agent of the MAS at time $t\geq 0$, where $\cS_i$ and $\cU_i$ denote her state space and input space, respectively, and $\Sigma \subseteq \mathbb{R}^m$. In addition, we denote by $\bm{x} \in \cS$ the joint state of the MAS, where $\bm{x} := (x_1,\dots,x_n)$ and $\cS:= \cS_1 \times \dots \times \cS_n$ (joint state space), and by $\bm{u} \in \cU$ the joint input of the MAS, where $\bm{u} := (u_1, \dots, u_n)$ and $\cU:= \cU_1 \times \dots \times \cU_n$ (joint input space). Furthermore, we denote by $\bm{x}_{-i} \in \cS_{-i}$ and $\bm{u}_{-i} \in \cU_{-i}$ the concatenations of the states and the inputs of all the agents except from the $i$-th agent (the sets $\cS_{-i}$ and $\cU_{-i}$ are defined accordingly).

The motion of the $i$-th agent is described by
\begin{equation}\label{eq:motion1}
\dot{x}_i = f_i(\bm{x}, \bm{u}),~\quad~x_i(0)=x_i^0,~\quad~i \in [1, n]_d,
\end{equation}
where $x^0_i \in \cS_i$ is the initial state of the $i$-th agent and $f_i: \cS_i \times \cU_i \rightarrow \cS_i$ is her associated vector field. Note that the evolution of the $i$-th agent is not fully determined by her own state and input. For instance, in any realistic setting, the input of every agent at each time is conditioned on the actions of the other agents or at least a subset of them. A similar argument can be made for the states of the other agents. We assume that the vector field $f_i$ satisfies regularity conditions that ensure the existence and uniqueness of solutions to the differential equations \eqref{eq:motion1} for all piecewise continuous joint inputs $\bm{u}$ taking values in $\cU$ and all joint states $\bm{x} \in \cS$. Finally, we write
\begin{equation}\label{eq:motionT}
\dot{\bm{x}} = \bm{f}(\bm{x}, \bm{u}),~\quad~ \bm{x}(0)=\bm{x}^0,
\end{equation}
where $\bm{x}^0 = (x^0_1, \dots, x^0_n) \in \cS$ is the joint initial state and $\bm{f}:=(f_1, \dots, f_n)$ is the joint vector field.

The task allocation problem seeks for individual assignments for a team of $n$ agents and for a given set of $p$ tasks, $\cT := \{ \cT_1, \dots, \cT_p\}$. Each task is associated with a distinct state in $\Sigma$. We denote by $X_{\cT}$ the set of states associated with the given tasks, where $X_{\cT} := \{ x_{\cT_1}, \dots, x_{\cT_p} \}$. In principle, an agent can be assigned at most one task in $\cT$ at each instant of time although more than one agents can be assigned to the same task simultaneously. We denote by $\cA_i:= \{ a_i^k:~k\in[1,\mathrm{card}( \cA_i )]_d \}$ the set of possible task assignments for the $i$-th agent given a set of tasks $\cT$. Later on, we will see that each assignment $a_i^k$ induces a corresponding (admissible) control input $u_i(\cdot)$ via the solution of a corresponding optimal control problem. We assume that 
$a_{\varnothing} \in \cA_i$, where $a_{\varnothing}$ denotes the null assignment (i.e., the $i$-th agent is not assigned to any task) which corresponds to the null control input, that is, when $a_i = a_{\varnothing}$, then $u_i(t) = 0$, for all $t \geq 0$. Each assignment $a_i^k \in \cA_i$ is equal to either a task in $\cT$, that is, $a_i^k = \cT_\ell$ where $\cT_\ell \in \cT$, or the null assignment, that is, $a_i^k = a_\varnothing$. Thus, $\cA_i \subseteq \overline{\cT}$, where $\overline{\cT} := \cT \cup 
\{ a_{\varnothing} \}$.

\section{Task Utilities}\label{s:utility}
The completion of a task $\cT_j \in \cT$ will accrue rewards to the agents assigned to it. These rewards, which do not depend on the states of the agents, reflect the  importance of this specific task as well as the likelihood of its successful completion by each agent assigned to it (in general, not all agents are equally likely to complete a specific task successfully). We will refer to these rewards as the static task utilities. Furthermore, an agent will have to incur a cost to complete her assigned task (e.g., the transition cost to a certain location associated with this task). It is worth mentioning that the task completion cost is state-dependent and we will refer to it as the \textit{dynamic task completion cost}. 

\noindent \textit{Static task utility:} Given an action profile $\bm{a} = (a_1, \dots, a_n)$, we denote by $\cT^{-1}_j( \bm{a} )$ the index-set corresponding to the agents assigned to task $\cT_j \in \cT$ under the particular profile, that is, $\cT^{-1}_j( \bm{a} ) =\{i \in [1,n]_d: a_i = \cT_j\}$. The completion of task $\cT_j$ will accrue a reward $r_{\cT_j} \geq 0$ to the agent or agents assigned to this task. In general, $r_{\cT_j}$ is a function of the task assignment profile, that is,
\begin{equation}
r_{\cT_j(\bm{a})} = \bar{r}_{\cT_j}[1-\prod\nolimits_{i \in \mathcal{T}_j^{-1} (\bm{a})} (1-p_{ij})], \label{eq}
\end{equation}
where $\bar{r}_{\cT_j}$ is the nominal reward of $\cT_j$ and $p_{ij} \in [0,1]$ is the probability of the task $\cT_j$ be completed successfully by the $i$-th agent. If 
$\cT^{-1}_j( \bm{a} ) = \varnothing$, then $r_{\cT_j}( \bm{a} ) := 0$.

\noindent \textit{State-dependent task completion cost:} Next, we define the cost for completing the task $\cT_j$ associated with the state $x_{\cT_j}$ at time $t=t_\f$ by the $i$-th agent. Essentially, the task completion cost is taken to be the cost incurred by the $i$-th agent, which starts from the state $x_i^0$ at time $t=0$, to reach the state $x_{\cT_j}$ at time $t=t_\f$. The latter state transition cost is defined as the optimal cost-to-go corresponding to the following optimal control problem:
\begin{problem}\label{problemOCP}
Let $a_i = \cT_j$, where $\cT_j \in \cT$ and $i \in [1,n]_d$. Furthermore, let $x_{\cT_j} \in \cS_i$ be the state associated with the task $\cT_j$ and let $t_\f>0$ be the corresponding  completion time (fixed and common for all the agents). Then, find an optimal piece-wise continuous input $u_i^{\star}(\cdot): [0,t_\f] \rightarrow \cU_i$ that minimizes the following performance index:
\begin{align}\label{eq:cost}
\cJ_i(u_i(\cdot); x_i^0, x_{\cT_j}) 
%
%
:= \int_0^{t_\f} \cL_i(x_i(t), u_i(t)) \ud t,
\end{align}
subject to the dynamic constraints \eqref{eq:motion1} and the terminal constraint: $\Psi_i( x_i(t_\f), x_{\cT_j} ) = 0$, where $\Psi_i(\cdot; x_{\cT_j})$ is a given $C^1$ function. Finally, the optimal cost-to-go is denoted by $\rho_i(x_i^0; x_{\cT_j})$, where 
$\rho_i(x_i^0; x_{\cT_j}) := \cJ_i(u_i^{\star}(\cdot);x_i^0, x_{\cT_j})$. 
\end{problem}
\begin{remark}
%
%
The terminal constraint function $\Psi_i$ can be defined, for instance, as follows: $\Psi_i( x_i(t_\f), x_{\cT_j} ) = x_i(t_\f) - x_{\cT_j}$, in which case we require that $x_i(t_\f) = x_{\cT_j}$ (hard constraint).
\end{remark}

\noindent \textit{Total Task Utility:} The total cost of completion of task $\cT_j$ under the action profile $\bm{a} = (a_1, \dots, a_n)$, which is denoted as $\cR_{\cT_j}(\bm{a};\bm{x}^0, x_{\cT_j})$, is defined as the sum of the individual task completion costs of all the agents assigned to that task. More precisely,
\begin{equation}
\cR_{\cT_j}(\bm{a};\bm{x}^0, x_{\cT_j}) := \sum\nolimits_{i \in \cT^{-1}_j(\bm{a}) } \rho_i(x_i^0; x_{\cT_j}).
\end{equation}
Note that $\cR_{\cT_j}$ depends on the initial state $ \bm{x}^0$ (more precisely, the initial states of the agents assigned to the task $\cT_j$).
Furthermore, the total task utility associated with task $\cT_j$ for a given $\bm{x}_0$ is denoted as $\cU_{\cT_j}(\bm{a};\bm{x}^0)$ and defined as follows:
\begin{equation}\label{eq:teamutility}
\cU_{\cT_j}(\bm{a};\bm{x}^0) := \max\{0, r_{\cT_j}(\bm{a}) - \cR_{\cT_j}(\bm{a},\bm{x}^0; x_{\cT_j})\}.
\end{equation}
Note that $\cU_{\cT_j}$ is state-dependent because the task completion costs $\rho_i(x_i^0; x_{\cT_j})$, for $i \in \cT^{-1}_j(\bm{a})$, are state-dependent.

\noindent \textit{Individual and Team Utilities and Solution Concepts:} 
%
First, we define the team's utility (the latter reflects the team's collective welfare), which is denoted by $\cU(\bm{a};\bm{x}^0)$, as follows:
\begin{equation}\label{eq:teamutility}
\cU(\bm{a};\bm{x}^0) := \sum\nolimits_{\cT_j\in \cT} \cU_{\cT_j}(\bm{a};\bm{x}^0).
\end{equation}

The individual utility of the $i$-th agent given a task profile $\bm{a} = (a_i, a_{-i})$, which is denoted as $\cU_{i}(a_i, a_{-i})$ or $\cU_{i}(\bm{a})$, is taken to be equal to her marginal contribution to the team's utility $\cU(\bm{a};\bm{x}^0)$, that is,
\begin{equation}\label{eq:marginaleq1}
\cU_{i}(\bm{a};\bm{x}^0) := \cU((a_i,a_{-i}); \bm{x}^0) - \cU((a_{\varnothing}, a_{-i}); \bm{x}^0)
\end{equation}
from which it follows, in view of \eqref{eq:teamutility}, that
\begin{equation}\label{eq:marginaleq2}
\cU_{i}(\bm{a};\bm{x}^0) = \cU_{\cT_j}((a_i,a_{-i}); \bm{x}^0) - \cU_{\cT_j}((a_{\varnothing}, a_{-i}); \bm{x}^0),
\end{equation}
where $(a_{\varnothing}, a_{-i})$ corresponds to the action profile when the $i$-th agent has a null assignment, that is, $a_i = a_{\varnothing}$. 
Next, we provide the definition of the basic solution concept that will be used in our task allocation problem.

\begin{definition} 
An assignment profile $\bm{a}^{\star}:=(a^{\star}_i, a^{\star}_{-i}) \in \cA$ is a pure strategy Nash equilibrium of the game $\cG^0$, where $\cG^0 := \langle \cU_{1}(\bm{a}; \bm{x}^0), \dots, \cU_{n}(\bm{a}; \bm{x}^0) ; \cA \rangle$,
if $\forall i \in [1,n]_d$
\begin{equation}\label{eq:agentutility}
\cU_{i}(a^{\star}_i, a^{\star}_{-i}; \bm{x}^0) \geq \cU_{i}(a_i, a^{\star}_{-i}; \bm{x}^0),~~~~\forall a_i \in \cA_i.
\end{equation}
\end{definition}

\begin{remark} The solution concept of (pure strategy) Nash equilibrium is fundamental in non-cooperative game theory. When all agents play in accordance with the Nash equilibrium, they act selfishly and try to maximize their own utilities conditional on the decisions of others (\textit{individual rationality}).
\end{remark}

\section{The Open-Loop Task Allocation Problem and its Decentralized Solution}\label{s:OLTA}

\subsection{Problem formulation and analysis}
Next, we formulate the task allocation problem as a non-cooperative game. In the following formulation, we only account for the estimates of the task completion costs at the initial time (open-loop approach).

\begin{problem}[OLTA: Open-Loop Task Allocation]\label{problemOLTA}
Let $t_\f > 0$ and $\bm{x}^0 \in \cS$ be given. Then, find a (time-invariant) task assignment profile $\bm{a}^{\star} \in \cA$, where $\bm{a}^{\star} :=(a^{\star}_i, a^{\star}_{-i})$, such that for all $i \in [1,n]_d$ the inequality \eqref{eq:agentutility} is satisfied.
In other words, the task assignment profile $\bm{a}:=(a^{\star}_i, a^{\star}_{-i})$ corresponds to a Nash equilibrium of the game $\cG^0$.
\end{problem}

\begin{remark}
In the formulation of the open-loop task allocation problem (Problem~\ref{problemOLTA}), the agents' utilities (or, more precisely, their functional descriptions) do not change with time, as the agents progress towards the states of their assigned tasks. This is because their estimated task completion costs are based on knowledge available at time $t=0$ and these estimates are not updated afterwards. To the $i$-th individual task assignment $a^\star_i$ from the optimal profile $\bm{a}^{\star}$, where, say, $a^\star_i = \cT_j$, we associate a corresponding state $x_{\cT_j}$, which in turn determines the terminal constraint $\Psi(x_i(t_\f);x_{\cT_j})=0$ in Problem~\ref{problemOCP}. Because all the task assignments are time-invariant, the control input $u_i^{\star}(\cdot)$ that solves Problem~\ref{problemOCP} will not be updated along the $i$-th agent' ensuing trajectory.  
\end{remark}

It is well-known that potential games correspond to a special class of  non-cooperative games that always admit pure strategy Nash equilibria. We claim that Problem~\ref{problemOLTA} corresponds to an exact potential game~\cite{p:potgames2016}.

\begin{definition}\label{def:potential}
The game, $\cG^0$, 
%
%
corresponds to an exact potential game, if there exists an exact potential, that is, a function $\cP: \cA \rightarrow \mathbb{R}$ such that, for all $i\in [1, n]_d$ and $\bm{a} \in \cA$, it holds true that
\begin{align}\label{eq:potential}
& \cU_{i}(a'_i, a_{-i}; \bm{x}^0) - \cU_{i}(a_i, a_{-i}; \bm{x}^0) \nonumber \\
& ~~~\qquad \qquad ~~~ = \cP(a'_i, a_{-i}) - \cP(a_i, a_{-i}),
\end{align}
for all $a_i, a_i' \in \cA_i$.
\end{definition}

\begin{proposition}
The open-loop task allocation problem (Problem~\ref{problemOLTA}) with task team utility and individual utilities defined by \eqref{eq:teamutility} and \eqref{eq:agentutility}, respectively,
is an exact potential game with potential $\cP = \cU(a_i, a_{-i}; \bm{x}^0)$, for all $a_i \in \cA_i$.
\end{proposition}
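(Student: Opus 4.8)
The plan is to verify the defining identity~\eqref{eq:potential} of Definition~\ref{def:potential} directly, with $\cP(\bm{a}) = \cU(\bm{a}; \bm{x}^0)$. The whole argument rests on one structural feature of the marginal-contribution (Wonderful Life Utility) construction~\eqref{eq:marginaleq1}: the individual utility $\cU_i(\bm{a}; \bm{x}^0)$ is the difference between the team utility $\cU((a_i, a_{-i}); \bm{x}^0)$ and the ``clamped'' team utility $\cU((a_{\varnothing}, a_{-i}); \bm{x}^0)$ obtained by replacing agent $i$'s action with the null assignment $a_{\varnothing}$, which lies in $\cA_i$ by the standing assumption of Section~\ref{s:prelim}. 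The key point is that this second term depends only on $a_{-i}$ and not on $a_i$.

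Concretely, I would fix $i \in [1,n]_d$, $a_{-i} \in \cA_{-i}$, and $a_i, a_i' \in \cA_i$, substitute~\eqref{eq:marginaleq1} into the left-hand side of~\eqref{eq:potential}, and cancel the two identical copies of $\cU((a_{\varnothing}, a_{-i}); \bm{x}^0)$ to obtain
\begin{equation*}
\cU_i(a_i', a_{-i}; \bm{x}^0) - \cU_i(a_i, a_{-i}; \bm{x}^0) = \cU((a_i', a_{-i}); \bm{x}^0) - \cU((a_i, a_{-i}); \bm{x}^0),
\end{equation*}
which is precisely~\eqref{eq:potential} with $\cP = \cU(\,\cdot\,; \bm{x}^0)$. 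As a consistency check I would also note that combining this with the task-wise decomposition~\eqref{eq:teamutility} of $\cU$ recovers~\eqref{eq:marginaleq2}: when agent $i$ switches from $a_i = \cT_j$ to $a_i' = \cT_\ell$, only the summands $\cU_{\cT_j}$ and $\cU_{\cT_\ell}$ change, since agent $i$ enters the total completion cost $\cR$ of no other task.

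I do not expect any genuine obstacle: this is the textbook fact that paying each player her marginal contribution to a common objective makes that objective an exact potential, and the proof is a two-line cancellation. The only points worth stating with care are (i) that $a_{\varnothing} \in \cA_i$ for every $i$, so that~\eqref{eq:marginaleq1} is well defined, and (ii) that $\cA$ is finite, which (together with the now-established potential-game property) lets us invoke the standard theory of potential games~\cite{p:monderer1996potential,p:potgames2016} to conclude that any maximizer of $\cP$ over $\cA$ is a pure-strategy Nash equilibrium of $\cG^0$; hence Problem~\ref{problemOLTA} is well posed. Note that the truncation $\max\{0, \cdot\}$ in the definition of $\cU_{\cT_j}$ and the dependence of $\cU$ on $\bm{x}^0$ are irrelevant to the potential-game property itself — they only influence which profile is selected as the equilibrium, not its existence.
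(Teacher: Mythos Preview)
Your argument is correct and is essentially the paper's own proof: both verify~\eqref{eq:potential} by the standard WLU cancellation, the only cosmetic difference being that the paper routes the computation through the task-wise decomposition~\eqref{eq:teamutility} and invokes~\eqref{eq:marginaleq2} at the end, whereas you use~\eqref{eq:marginaleq1} directly and mention the task-level version only as a consistency check. Nothing is missing.
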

\begin{proof}
We will directly verify \eqref{eq:potential}. In particular,
\begin{align*}
 & \cU(a'_i, a_{-i};\bm{x}^0) - \cU(a_i, a_{-i};\bm{x}^0)
\\ & ~~~ = \sum_{\cT_j \in \cT} \big( \cU_{\cT_j}(a'_i, a_{-i};\bm{x}^0) - \cU_{\cT_j}(a_i, a_{-i};\bm{x}^0) \big)
\\ & ~~~ = \sum_{\substack{k=1\\ k\neq i}}^{p}  \cU_{\cT_j}(a_k, a_{-k};\bm{x}^0)  - \sum_{k=1}^{p}  \cU_{\cT_j}(a_k, a_{-k};\bm{x}^0) 
\\ &~~~~~~ + \cU_{\cT_j}(a'_i, a_{-i};\bm{x}^0) 
\\ &~~~ = \cU_{\cT_j}(a'_i, a_{-i};\bm{x}^0) - \cU_{\cT_j}(a_i, a_{-i};\bm{x}^0)
\\ &~~~ = \big( \cU_{\cT_j}(a'_i, a_{-i};\bm{x}^0) - \cU_{\cT_j}(a_\varnothing,a_{-i};\bm{x}^0) \big)
\\ &~~~~~~ - \big( \cU_{\cT_j}(a_i, a_{-i};\bm{x}^0) - \cU_{\cT_j}(a_\varnothing,a_{-i};\bm{x}^0) \big)
\\ & ~~~ = \cU_i(a'_i,a_{-i};\bm{x}^0) - \cU_i(a_i,a_{-i};\bm{x}^0),
\end{align*}
where in the derivation of the last equality we have used Eq.~\eqref{eq:marginaleq2}. In view of Definition~\ref{def:potential}, we conclude that $\cG^0$ corresponds to an exact potential game.
\end{proof}


\subsection{Negotiation protocols for decentralized task allocation}\label{ss:negotiate}
Problem~\ref{problemOLTA} can be solved by utilizing standard tools for the computation of Nash equilibria of noncooperative (static or single-act) games and in particular potential games~\cite{p:potgames2016}. An alternative approach is to employ game-theoretic learning algorithms which generate a sequence of task assignment profiles that converge to a Nash equilibrium.
Some of these algorithms include the fictitious play (FP), spatial adaptive play (SAP), and generalized regret matching (GRM) algorithms to name but a few; the reader may refer to~\cite{p:shamma2007} for more information on these and other similar algorithms. A key point is that for their realization, an agent does not have to know the utilities of her teammates (decentralized implementation). 

During the negotiation (learning) process, the task assignment profile of the team is updated at different time instants that form a non-decreasing sequence $\{ \tau_k\}_{k=0}^\infty$ in $[0,t_\f]$ such that $\tau_0=0$ and $\lim_{k \rightarrow \infty} \tau_k = t_\f$. The time instant $\tau_k$ corresponds to the $k$-th stage of the negotiation process. At that stage, the $i$-th agent picks her new task assignment, which we denote as $a_i(\tau_k)$; we also denote the corresponding profile of the whole team as $\bm{a}(\tau_k)$. The exact definition of the $a_i(\tau_k)$ will be determined by the particular learning algorithm that will be employed, which in turn will rely on a corresponding information set $\cI^i_k$. The latter set may encode information about the past performance of the $i$-th agent (measured in terms of past values of her own utility) as well as information about the history (whole or truncated) of her teammates' actions (such information may correspond to, for instance, the empirical distribution of the agents' past task assignments). We write
\begin{equation}\label{eq:updatelawG0}
    a_i(\tau_k) = \phi_i(\cI^i_k; \cG^0),~~~ \bm{a}(\tau_k) = \bm{\phi}(\bm{\cI}_k; \cG^0),~~~i \in [1,n]_d,
\end{equation}
where $\phi_i: \cI^i_k \rightarrow \cA_i$, for $i \in [1, n]_d$, is the update law (or proposal) of the individual target assignment of the $i$-th agent whereas $\bm{\phi}: \bm{\cI}^k \rightarrow \cA$, with $\bm{\phi}(\bm{\cI}_k; \cG^0) := (\phi_1(\cI^1_k; \cG^0), \dots, \phi_n(\cI^n_k; \cG^0))$ is the update law of the task assignment profile of the MAS given the joint information set $\bm{\cI}_k := \cI^1_k \times \dots \times \cI^n_k$. The following claim is based on the analysis provided in \cite{p:shamma2007} (refer to, for instance, Theorem 4.1) and references therein.

\textit{Claim 1:} The update law $\bm{a}(\tau_k)$ which is defined as in \eqref{eq:updatelawG0} and corresponds to one of the decentralized negotiation protocols (game-theoretic learning algorithms) used \cite{p:shamma2007} will converge to a pure strategy Nash equilibrium of the game $\cG^0$.

It is important to emphasize that the update law for the task assignment profile will solve Problem~\ref{problemOLTA} under the assumption that throughout the interval $[0,t_\f]$, the functional description of the utilities will be based on their initial estimates at time $t=0$. However, the agents' utilities are state-dependent and thus their functional description will change along the agents' ensuing trajectories. This variability in the agents' preferences and capabilities (as reflected on their utilities) cannot be captured in this update law as well as the OLTA problem itself.
An alternative interpretation of the negotiation process is to assume that it does not take place over the time interval $[0,t_\f]$ but instantaneously, at time $t=0$. In other words, the clock is paused until the negotiations have converged (within some acceptable tolerance) to a Nash equilibrium of the potential game $\cG^0$. Subsequently, the agents can execute the corresponding inputs that will transfer them to the terminal states associated with their assigned tasks (these inputs are computed by solving Problem~\ref{problemOCP} for each agent). The input signal  will remain the same function of time, for all $t \in [0, t_\f]$. It is worth mentioning that the game-theoretic learning algorithms can be implemented based on local information (distributed implementation) by requiring that an agent cannot be assigned a task which is not within a certain range $\varrho >0$ from her (\textit{range constrained} case) in contrast with the nominal (\textit{range unconstrained} case) in which $\varrho \rightarrow \infty$.

\section{Dynamic Task Allocation and a Greedy Algorithm for its Solution}\label{s:RHTA}

\subsection{Problem formulation}
Next, we formulate a dynamic version of the task allocation problem in which the fact that the agents' utilities change along their ensuing trajectories is accounted in the determination of their task assignments in contrast with the OLTA problem. In this problem formulation, a new game $\cG^t$, where $\cG^t:=\langle \cU_{1}(\cdot; \bm{x}(t)), \dots, \cU_{n}(\cdot; \bm{x}(t)); \cA \rangle$ is essentially obtained at each $t \in [0,t_\f]$ as the agents move in their state space. 
%

\begin{problem}[DTA: Dynamic Task Allocation]\label{problemRHTA}
Let $t_\f > 0$ and $\bm{x}^0 \in \cS$ be given. Then, find a time-varying task assignment profile $\bm{a}^{\star}(\cdot): [0,t_\f] \rightarrow \cA$, where $\bm{a}^{\star}(t) :=(a^{\star}_i(t), a^{\star}_{-i}(t))$, which is such that for all $i \in [1,n]_d$:
\begin{equation*}
\cU_{i}(a^{\star}_i(t), a^{\star}_{-i}(t); \bm{x}(t)) \geq \cU_{i}(a_i(t), a^{\star}_{-i}(t); \bm{x}(t)),
\end{equation*}
for all $a_i(t) \in \cA_i$, as $t \rightarrow t_\f$. In other words, the task assignment profile $\bm{a}^{\star}(t)$ converges to a Nash equilibrium of the game $\cG^t$ as $t \rightarrow t_\f$.
\end{problem}
\begin{remark}
Problem~\ref{problemRHTA} seeks for a task assignment profile that will become mutually agreeable as $t$ approaches the final time $t_\f$. Note that if $t_\f$ is taken to be sufficiently large, then the solution to the Problem~\ref{problemRHTA} will essentially converge to a \textit{steady-state} task assignment profile.
\end{remark}

\subsection{A greedy algorithm for task allocation}\label{s:greedy}

Next we propose a greedy solution approach to address Problem~\ref{problemRHTA}. To ensure that the game-theoretic learning algorithms discussed in Section~\ref{ss:negotiate} will converge to a Nash equilibrium as $t\rightarrow t_\f$, we propose to stop updating the agents' utilities at time $t=t_\f-\epsilon$ for some $0<\epsilon<t_\f$ so that the utilized learning algorithm (whose convergence is guaranteed only for a static game) are given the chance to converge during the sub-interval $[t_\f-\epsilon, t_\f]$. A key difference between the DTA and OLTA problems is that the static game that determines the task assignment profile in the former is not $\cG^0$ (game corresponding to time $t=0$), as in the latter problem, but a game corresponding to time $t=t_\f - \epsilon$, assuming that the (dynamic) game has evolved for $t\in [0,t_\f - \epsilon]$.

Next, we present the main steps of the proposed greedy algorithm. To this aim, let us consider a sequence $\{\tau_k \}_{k=0}^{\infty}$ as in Section~\ref{ss:negotiate} and let $K = K(\epsilon)$ be the first positive integer at which $\tau_K> t_\f - \epsilon$ for the given $\epsilon$ (the existence of such $K$ is guaranteed by the fact that $\tau_k\rightarrow t_\f$ as $k\rightarrow \infty$). 
Now, let $\bm{\phi}(\bm{\cI}_k; \cG^{\tau_k})$ denote the update law of a negotiation protocol as in Section~\ref{ss:negotiate}. Let us consider the following update law:
\begin{align}\label{eq:updatedyn}
\bm{a}_d(\tau_k)& := \begin{cases} \bm{\phi}(\bm{\cI}_k; \cG^{\tau_k}), & \mathrm{for}~k \in [0,K-1]_d, \\
\bm{\phi}(\bm{\cI}_k; \cG^{\tau_K}), & \mathrm{for}~k \in \mathbb{Z}_{K}.
\end{cases}
\end{align}
We claim that the update law \eqref{eq:updatedyn} will find an approximate (in the sense that we will explain shortly later) solution to Problem~\ref{problemRHTA}. 

\begin{proposition}\label{prop:mainresult}
The piecewise constant dynamic task assignment profile
$\bm{a}(t) = \bm{a}_d(\tau_k),~~\forall t\in [\tau_k, \tau_{k+1})$
for all $k \in \mathbb{Z}_{\geq 0}$, where $\bm{a}_d$ is defined as in \eqref{eq:updatedyn}, will converge, as $t \rightarrow t_\f$, to a Nash equilibrium of the game $\cG^{T}$, where $T \in [t_\f -\epsilon, t_\f]$.
\end{proposition}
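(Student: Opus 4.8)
The plan is to reduce the terminal phase of the dynamic negotiation to an ordinary static negotiation and then invoke Claim~1. First I would fix the candidate time $T := \tau_K$, where $K = K(\epsilon)$ is the first integer with $\tau_K > t_\f - \epsilon$ as in the statement. Since $\{\tau_k\} \subset [0,t_\f]$ we have $\tau_K \le t_\f$, and by definition of $K$ we have $\tau_K > t_\f - \epsilon$; hence $T \in (t_\f - \epsilon, t_\f] \subseteq [t_\f - \epsilon, t_\f]$, which is the required location of $T$.

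Next I would observe that for every $k \in \mathbb{Z}_{K}$ the update law \eqref{eq:updatedyn} reduces to $\bm{a}_d(\tau_k) = \bm{\phi}(\bm{\cI}_k; \cG^{\tau_K}) = \bm{\phi}(\bm{\cI}_k; \cG^{T})$; that is, from stage $K$ onward the negotiation protocol is applied to a single, time-invariant game $\cG^{T}$, because the functional description of the utilities is frozen at the cost-to-go estimates corresponding to the joint state $\bm{x}(\tau_K)$, so the subsequent evolution of $\bm{x}(t)$ for $t > \tau_K$ no longer alters the game being played. Moreover, the argument used in the Proposition carries over with $\bm{x}^0$ replaced by $\bm{x}(\tau_K)$, so $\cG^{T}$ is an exact potential game; in particular it admits a pure strategy Nash equilibrium and every learning algorithm referenced in Section~\ref{ss:negotiate} is applicable to it.

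Then I would apply Claim~1 to the static game $\cG^{T}$: the tail sequence $\{\bm{a}_d(\tau_k)\}_{k \in \mathbb{Z}_{K}}$, being generated by the update law $\bm{\phi}(\cdot; \cG^{T})$ of one of the decentralized negotiation protocols of \cite{p:shamma2007}, converges to a pure strategy Nash equilibrium $\bm{a}^{\star}$ of $\cG^{T}$. Finally I would transfer this discrete-stage convergence to the asserted $t \to t_\f$ statement: since $\bm{a}(t) = \bm{a}_d(\tau_k)$ on $[\tau_k, \tau_{k+1})$ and $\tau_k \to t_\f$ as $k \to \infty$, any $t$ sufficiently close to $t_\f$ lies in $[\tau_k, \tau_{k+1})$ for a correspondingly large $k \ge K$, whence $\bm{a}(t) = \bm{a}_d(\tau_k) \to \bm{a}^{\star}$ as $t \to t_\f$; thus $\bm{a}(\cdot)$ converges as $t \to t_\f$ to a Nash equilibrium of $\cG^{T}$ with $T \in [t_\f - \epsilon, t_\f]$, which is the claimed (approximate) solution of Problem~\ref{problemRHTA} — approximate precisely because the equilibrium is that of $\cG^{T}$ rather than of the limiting game $\cG^{t_\f}$.

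The main obstacle is the hand-off at stage $K$: one must justify that the information state $\bm{\cI}_K$ inherited from the time-varying phase (stages $0,\dots,K-1$), e.g.\ the accumulated empirical distributions of past play, is an admissible initial condition for the static-game convergence result invoked in Claim~1, so that the finite transient produced while the game was still varying does not affect the long-run behavior of the FP/SAP/GRM-type dynamics; and, if a quantitative version of the statement is desired, one must in addition bound the distance between the Nash equilibrium of $\cG^{T}$ and that of $\cG^{t_\f}$ in terms of $\epsilon$ and the regularity of the map $t \mapsto \cU_i(\cdot; \bm{x}(t))$.
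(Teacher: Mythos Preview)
Your proposal is correct and follows essentially the same route as the paper: set $T=\tau_K$, observe that for $k\ge K$ the update law reduces to the static protocol applied to the fixed potential game $\cG^{\tau_K}$, invoke Claim~1, and pass from stage convergence to $t\to t_\f$ convergence via the piecewise-constant interpolation. The paper's only stylistic difference is that it phrases the tail argument as ``truncating a convergent sequence preserves its limit,'' which is precisely its way of dismissing the hand-off issue you (rightly) isolate as the residual subtlety.
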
  
\begin{proof}
Given that $\{\tau_k \}_{k=0}^{\infty}$ is a non-decreasing sequence in $[0,t_\f]$ which converges to $t_\f$ as $k \rightarrow \infty$, we conclude that $\tau_k \in [t_\f-\epsilon, t_\f]$, for all $k\in \mathbb{Z}_K$. After truncating the $K-1$ first elements of $\{\tau_k \}_{k=0}^{\infty}$, we obtain a new non-decreasing sequence $\{\sigma_n\}_{n=0}^{\infty}$, where $\sigma_n=\tau_{n+K}$, for $n \in \mathbb{Z}_{0}$, which implies that $\sigma_0=\tau_K$ and $\lim_{n\rightarrow \infty}\sigma_n= t_\f$. In view of Claim~1, the update law $\bm{a}(\tau_k) :=  \bm{\phi}(\bm{\cI}_k; \cG^0)$ defined in \eqref{eq:updatelawG0} will converge to a Nash equilibrium of the game $\cG^0$ as $k \rightarrow \infty$. From real analysis, we know that after truncating the first $K-1$ elements of the convergent sequence $\{ \bm{a}(\tau_k)\}_{k=0}^{\infty}$, we obtain a new sequence $\{ \bm{a}(\sigma_n)\}_{n=0}^{\infty}$ that will remain convergent with the same limit. The previous claim on convergence holds true for any game $\cG^{\tau_k}$ for a fixed $k$ when the latter is treated as a static game (with a possibly different limit for each $\tau_k$). We conclude that the sequence $\{ \bm{a}_d(\sigma_n)\}_{n=0}^{\infty}$, where $\bm{a}_d$ is defined in \eqref{eq:updatedyn}, will also converge to a Nash equilibrium of the game $\cG^{\tau_K}$, where by definition $\tau_{K} \in [t_\f - \epsilon, t_\f]$. This concludes the proof. 
\end{proof}
\begin{remark}
If at time $t=\tau_k$ the individual assignment of the $i$-agent attains a different value than at $t= \tau_{k-1}$, then the state corresponding to her new task will also be different. Therefore, the $i$-th agent will have to solve Problem~1 with the updated terminal constraint, with her initial state set equal to $x_i(\tau_k)$ and the final time to $t_\f - \tau_k$.
\end{remark}

\section{Numerical Simulations}\label{s:simu}

In this section, we present numerical simulations to illustrate the main ideas of the methods proposed so far. We consider a team of agents with double integrator dynamics, that is, $\ddot{p}_i = u_i$, 
%
%
with $p_i(0)=p_i^0$ and $\dot{p}_i(0)=v_i^0$, where $p_i \in \mathbb{R}^2$ ($p_i^0 \in \mathbb{R}^2$) and $\dot{p}_i \in \mathbb{R}^2$ ($v_i^0 \in \mathbb{R}^2$) denote, respectively, the position and velocity of the $i$-th agent at time $t$ ($t=0$), $i \in [1,n]_d$. The performance index is given by $
\cJ(u_i(\cdot)) := (1/2) \int_{0}^{t_\f} |u_i(t)|^2 \mathrm{d}t$ whereas the terminal constraint function $\Psi_i(\x_i(t_\f); x_{\cT_j}) := x_i - x_{\cT_j}$, where $x_i := (p_i,~\dot{p}_i) \in \mathbb{R}^4$ and $x_{\cT_j} := (p_{\cT_j}, 0) \in \mathbb{R}^4$, which means that the $i$-th agents tries to reach the position $p_{\cT_j}$ associated with her assigned task $\cT_j$ at time $t=t_\f$ with zero terminal velocity (soft landing).
It turns out (see, for instance, \cite{p:BakACC14a}) that the optimal control input is given by 
$u_i^{\star}(t;t_\f, x_i^0)  = \alpha + t \beta$, $\alpha := (6/t_\f^2) (p_{\cT_j} - p_i^0 - t_\f v^0_i) + (2/t_\f) v^0_i$, $\beta := -(12/ t_\f^3)( p_{\cT_j}- p_i^0 - t_\f v^0_i) - (6/ t_\f^2 ) v^0_i$,
and the optimal cost-to-go by
$\rho_i(x_i^0;x_{\cT_j}) := (1/2)(t_\f |\alpha|^2 + t_\f^2 \alpha^{\mathrm{T}} \beta + (1/3)t_\f^3 |\beta|^2)$.


\begin{figure}
\centering
\subfigure[$t=0^+$]{\includegraphics[width=5.2cm]{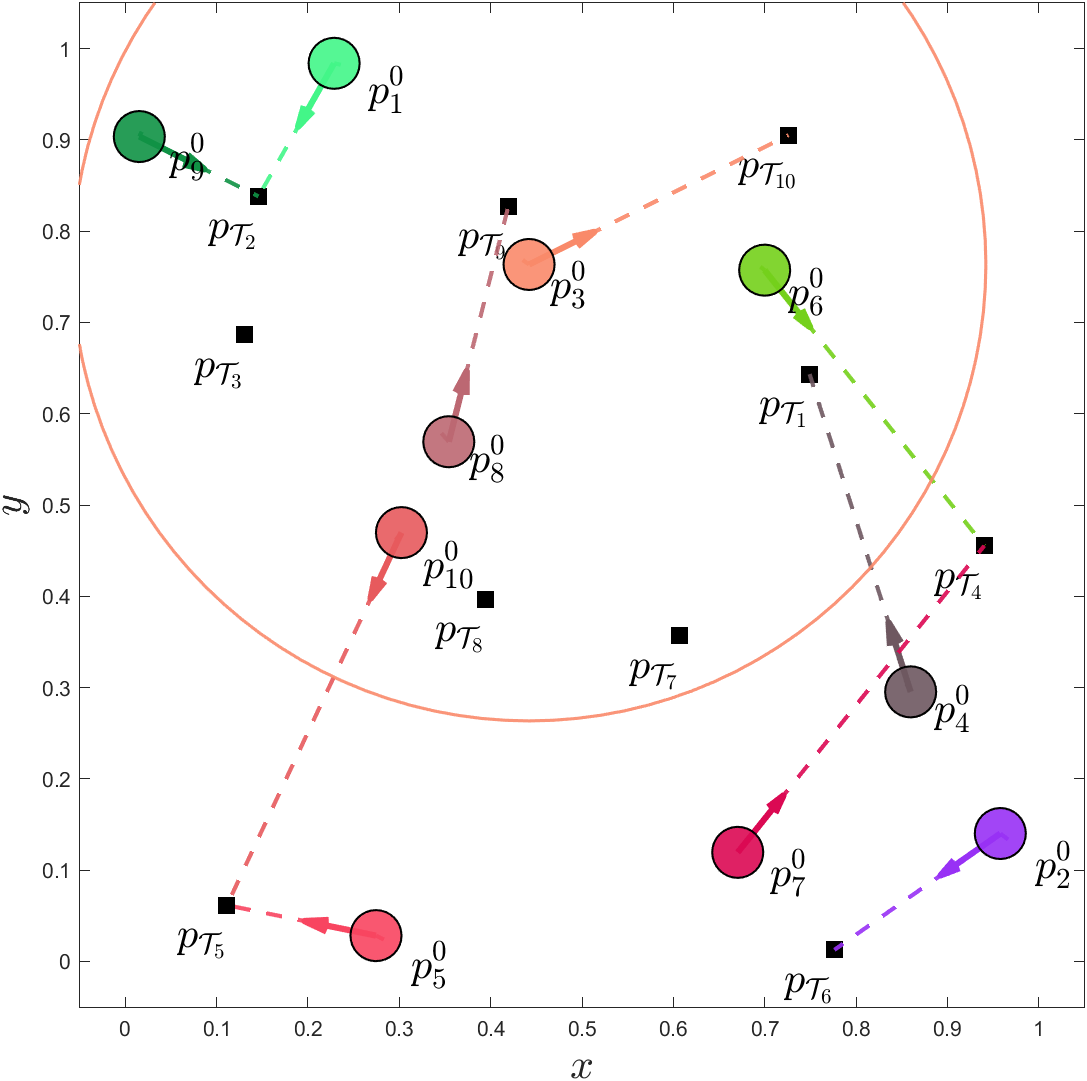}\label{fig:Ex_Im}}
\subfigure[$t=2.6$]{\includegraphics[width=5.2cm]{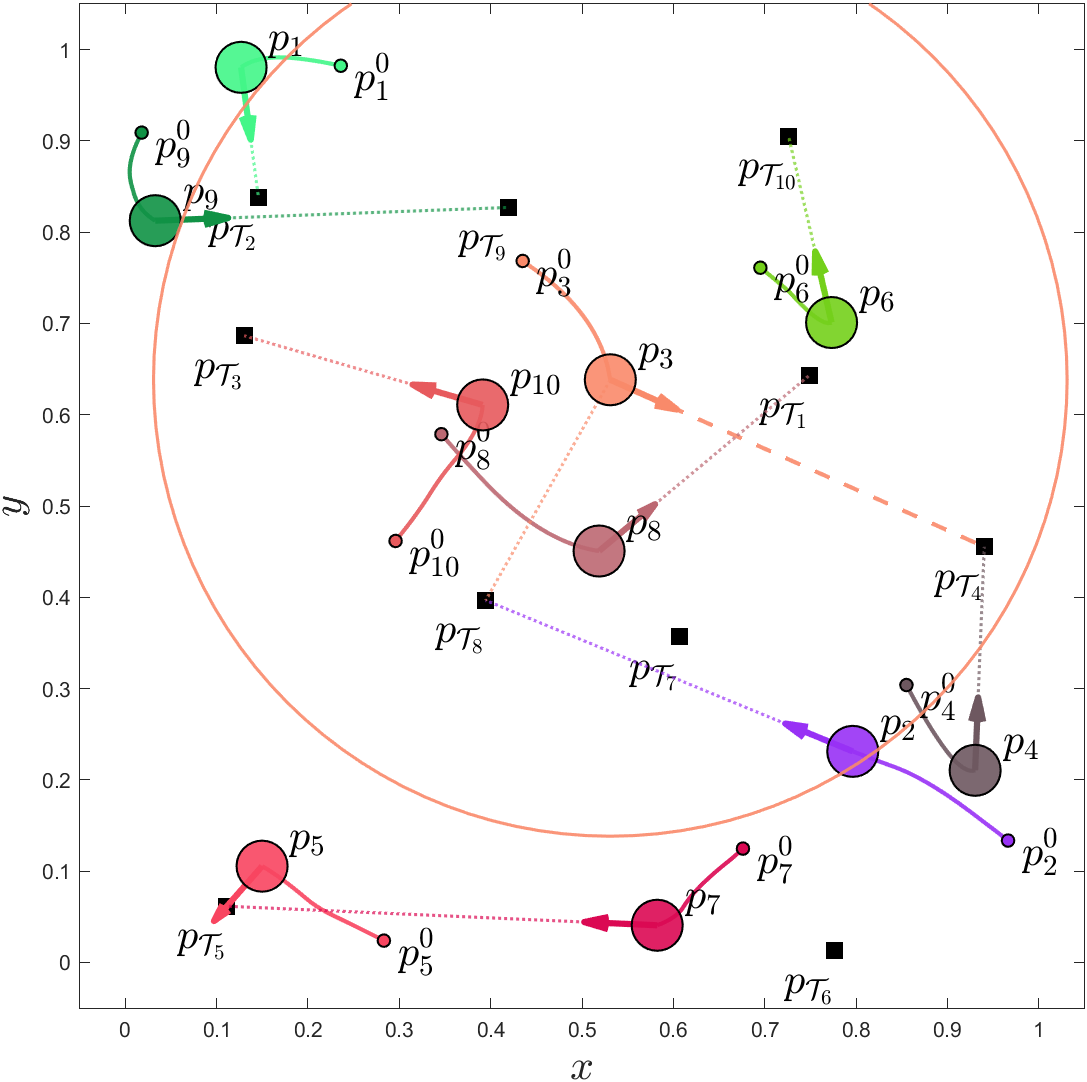}\label{fig:Ex_Im2}}
\subfigure[$t=10$]{\includegraphics[width=5.2cm]{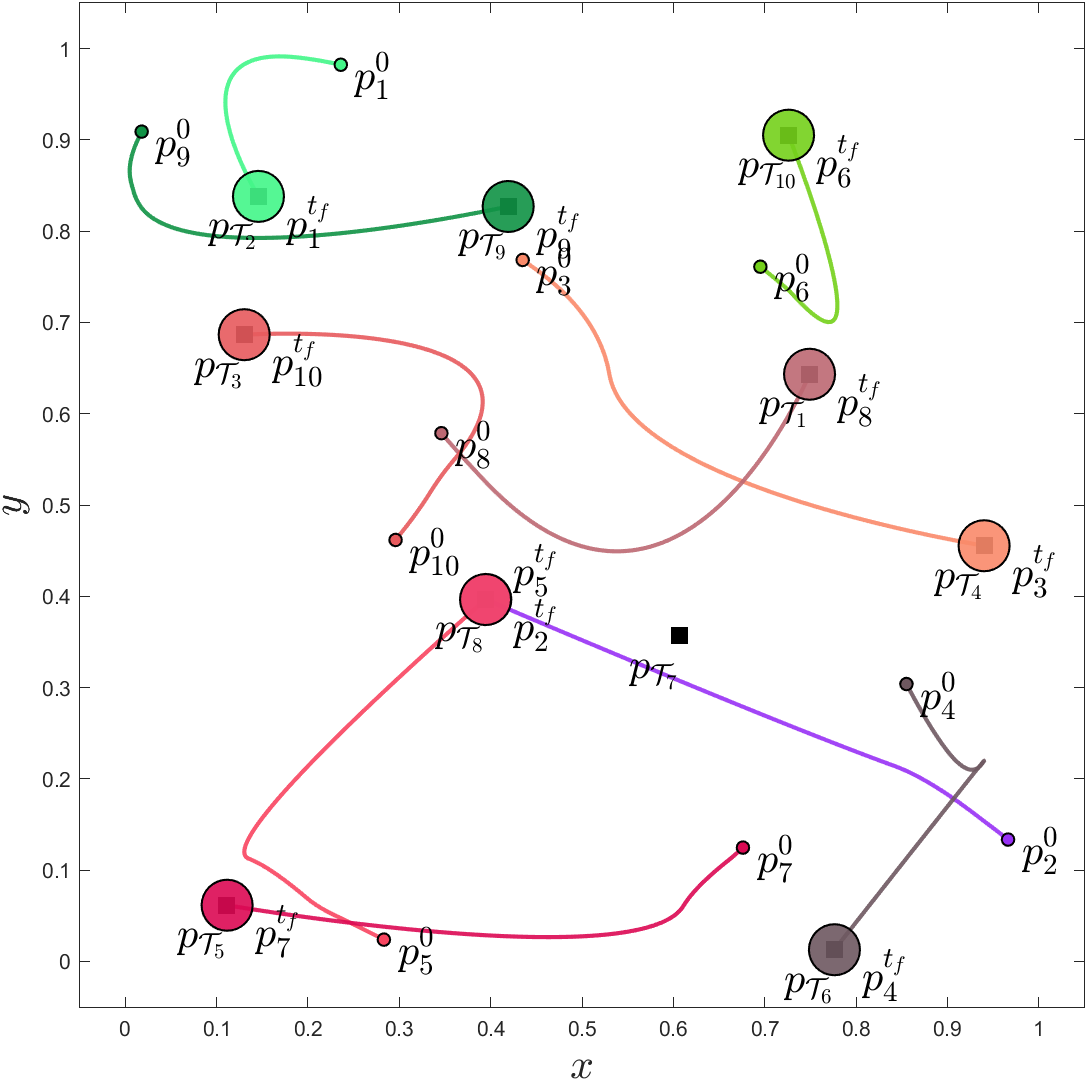}\label{fig:Ex_Im2}}
\caption{Dynamic task allocation for the range constrained case (GRM, $n=p=10$, $\varrho=0.5$, $t_\f=10$, $\cU=3.6154$)}
\label{fig:RHTA}
\end{figure}

\begin{figure}
\centering
\subfigure[$t=0^+$]{\includegraphics[width=5.2cm]{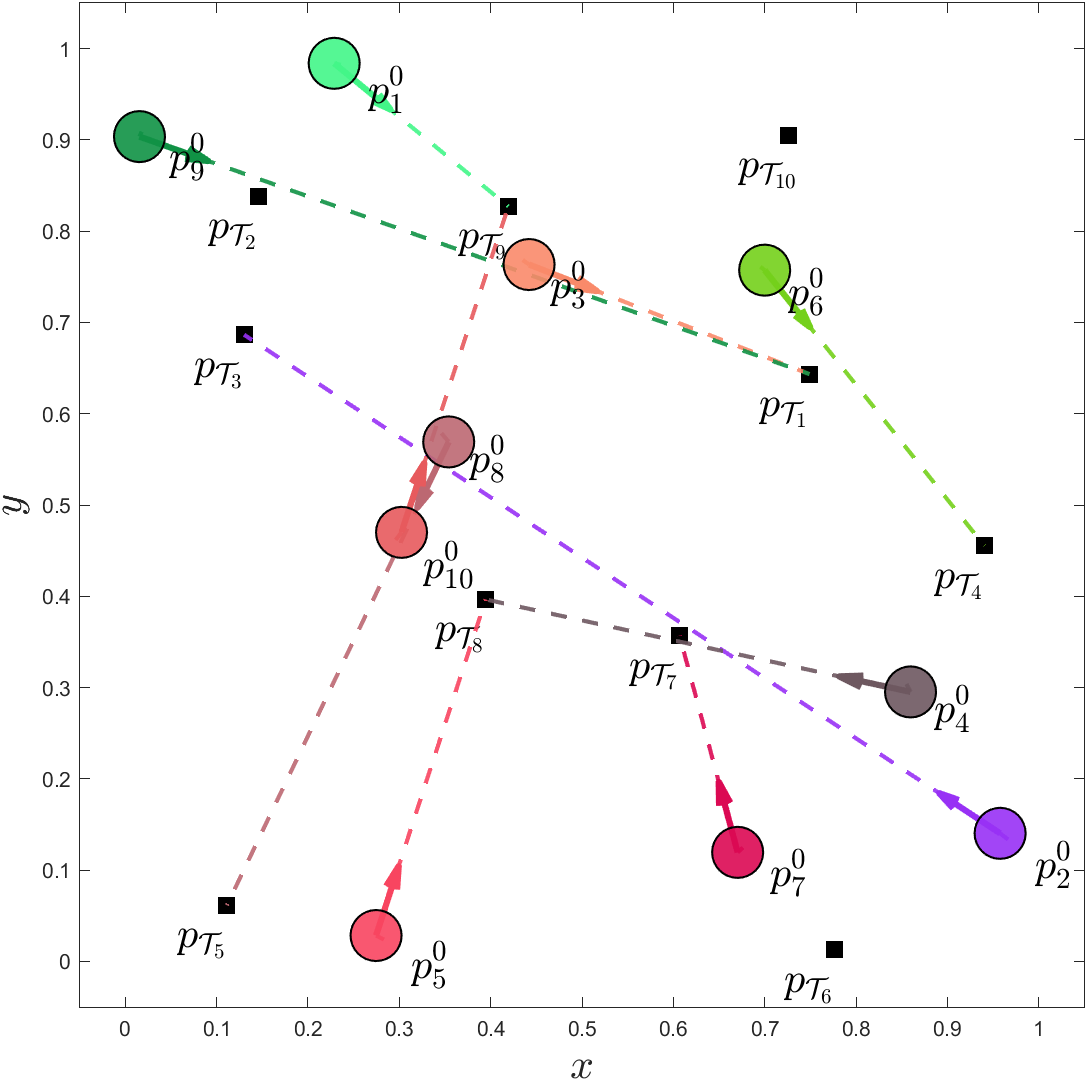}\label{fig:Ex_Im}}
\subfigure[$t=2.1$]{\includegraphics[width=5.2cm]{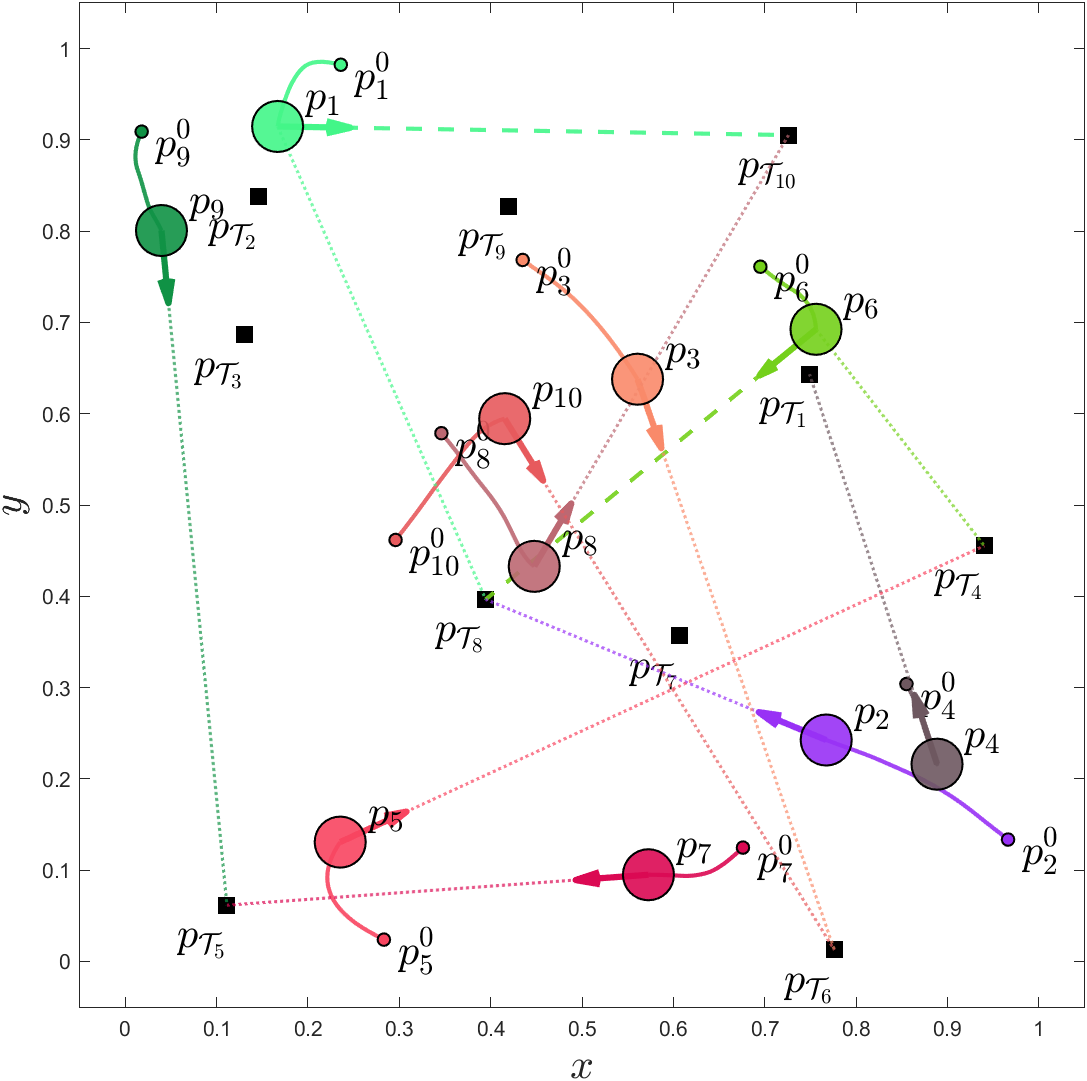}\label{fig:Ex_Im2}}
\subfigure[$t=10$]{\includegraphics[width=5.2cm]{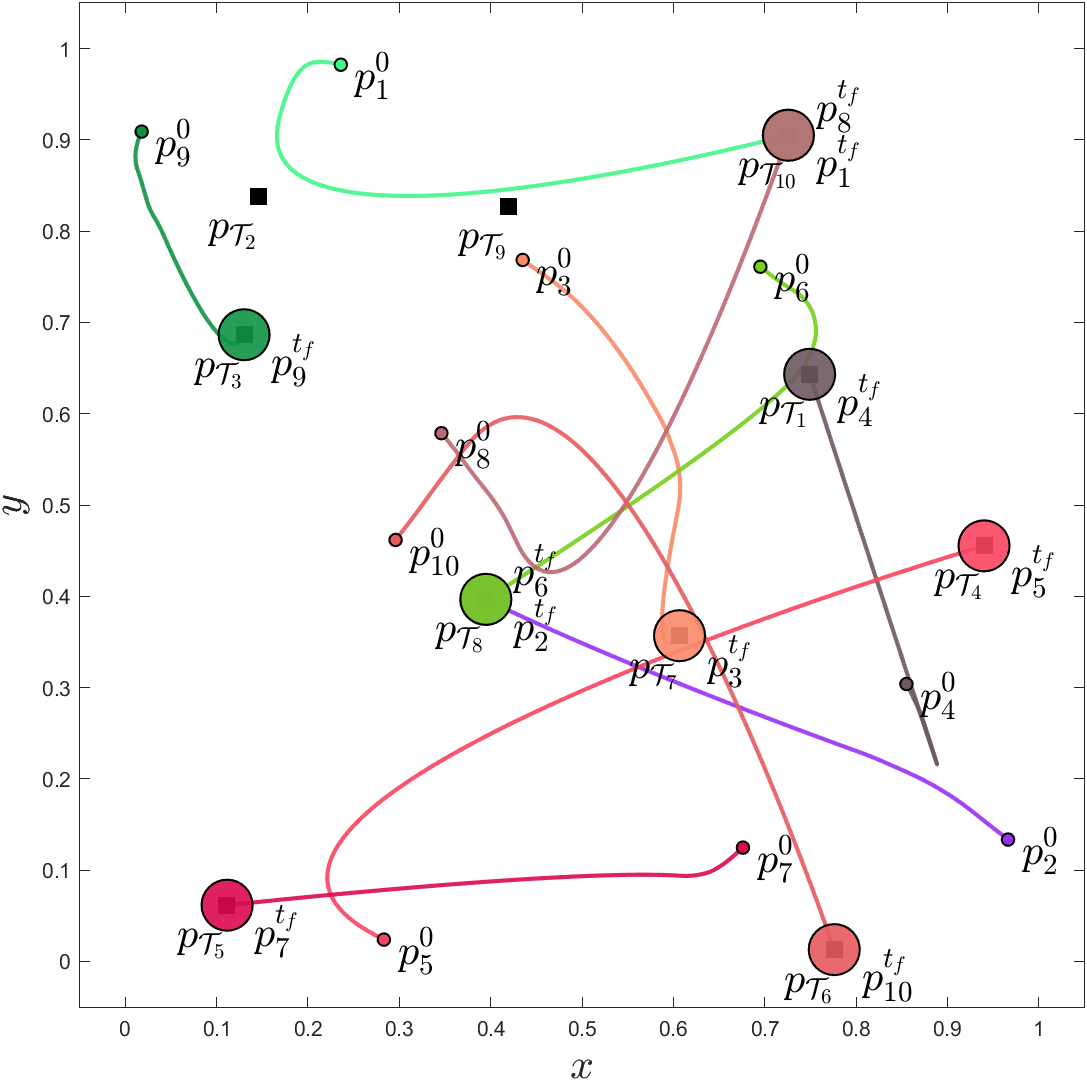}\label{fig:Ex_Im2}}
\caption{Dynamic task allocation for the range unconstrained case (GRM, $n=p=10$, $\varrho \rightarrow \infty$, $t_\f=10$, $\cU=4.0398$)}
\label{fig:RHTAUNC}
\end{figure}

\begin{figure}
\centering
\subfigure[$GRM$]{\includegraphics[width=4.2cm]{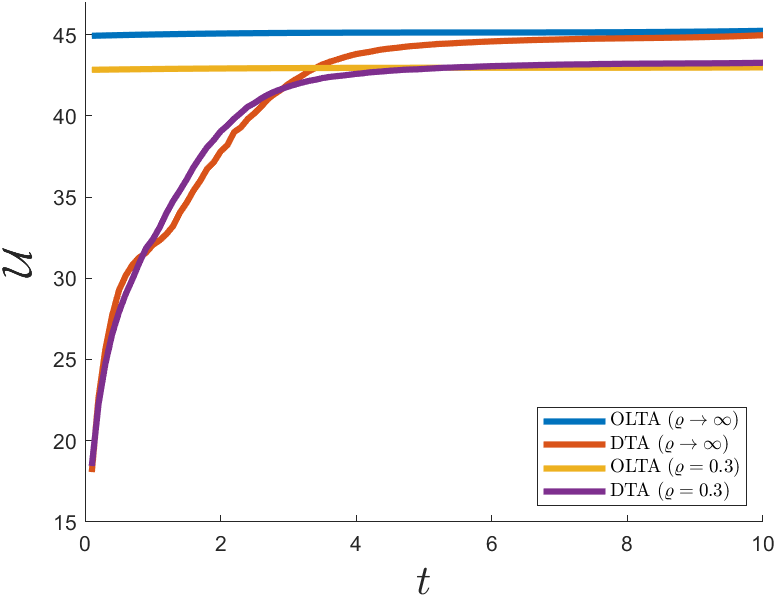}\label{fig:Ex_Im}}
\subfigure[$SAP$]{\includegraphics[width=4.2cm]{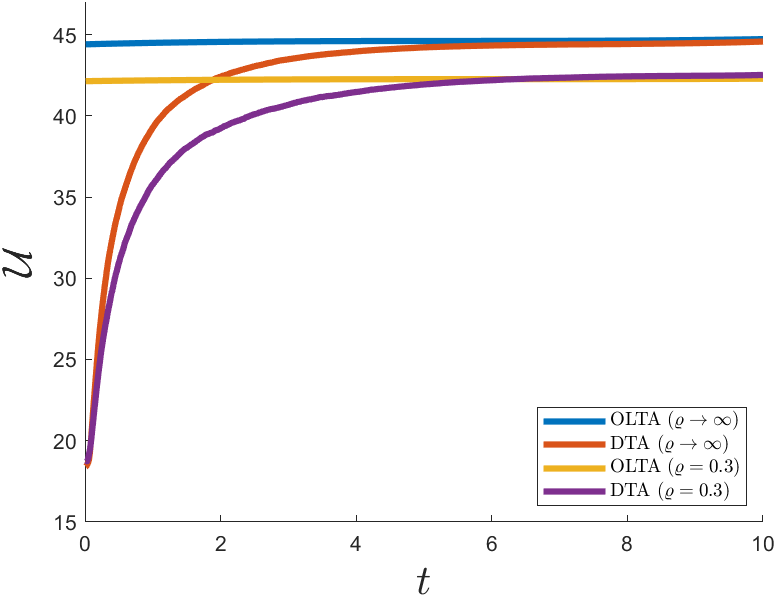}\label{fig:Ex_Im2}}
\caption{Team utilities versus time ($n=p=100$, $t_\f=10$)}
\label{fig:teamUtil}
\end{figure}

\begin{table*}[htbp]
\caption{Team Utilities ($n=p=100$) for range constrained and unconstrained cases}
\begin{center}
\begin{tabular}{|c|c|c|c|c|c|c|c|c|}
\hline
\multicolumn{1}{|c|}{}&\multicolumn{4}{c|}{GRM}&\multicolumn{4}{c|}{SAP} \\
\cline{1-9}
\multicolumn{1}{|c|}{}&\multicolumn{2}{c|}{OLTA}&\multicolumn{2}{c|}{DTA}&\multicolumn{2}{c|}{OLTA}&\multicolumn{2}{c|}{DTA} \\
\cline{1-9}
$t_\f$ & \textit{$\varrho \rightarrow \infty$}& \textit{$\varrho=0.3$}& \textit{$\varrho \rightarrow \infty$}& \textit{$\varrho=0.3$}& \textit{$\varrho \rightarrow \infty$}& \textit{$\varrho=0.3$}& \textit{$\varrho \rightarrow \infty$}& \textit{$\varrho=0.3$} \\
\hline
2& 37.1087& 39.2745& 25.4762& 33.9018& 40.4893& 39.3430& 35.8556& 36.2413 \\
\hline
5& 44.4502& 42.6770& 43.2366& 42.4138& 44.2700& 41.9399& 43.3736& 41.4702 \\
\hline
10& 45.2247& 42.9761& 44.9560& 43.2582& 44.7164& 42.2792& 44.5670& 42.5155 \\
\hline
\end{tabular}
\label{tabl:teamUtil}
\end{center}
\vspace{-2mm}
\end{table*}

We will present numerical simulations for both Problem~\ref{problemOLTA} (OLTA) and Problem~\ref{problemRHTA} (DTA) based on the SAP and GRM algorithms from~\cite{p:shamma2007} for both the range constrained and unconstrained cases. We will use a constant time step $\delta t$ (although in Proposition~2, we proposed a decreasing time step, it turns out that a sufficiently small constant step is adequate for our simulations). The negotiation process for the OLTA ran for $k$ rounds (all these rounds took place at time $t=0$ per the discussion in Section~\ref{ss:negotiate}) in order to converge to a Nash equilibrium before the agents start moving toward the states corresponding to their assigned tasks. We have used $k=100$ for the GRM algorithm and $k=1000$ for the SAP algorithm. The negotiation process for the DTA starts with a random task assignment profile at $t=0$ and subsequently, the agents continue to update their utilities and individual assignments at every time step ($k=t_\f/\delta t$). We have noticed that $\delta t$ must be smaller for the SAP algorithm than the GRM algorithm to achieve convergence. For this reason, we select $\delta t=0.1$ for GRM and $\delta t =0.01$ for SAP when solving the DTA problem. Per the discussion in Section~\ref{s:greedy}, the agents' utilities are not updated after time $t=t_\f-\epsilon$ whereas the negotiation process continues until $t=t_\f$. In our simulations we have used the following parameter values: $\epsilon = t_\f/20,\ p_i^0 \in [0,1]^2,\ \dot{p}_i^0 \in [-0.1,0.1]^2,\ p_{\cT_j} \in [0,1]^2,\ \bar{r}_{\cT_j} \in [0,1],\ \textrm{and}\ p_{i,j} \in [0,1]$ where $i \in [1,n]_d$ and $j \in [1,p]_d$. For the implementation of the GRM and SAP algorithms, we have used $\rho =0.1$ (discount factor), $\alpha = 0.5$ (parameter for the agents' willingness to optimize at each time step) and randomization level $\tau = 10/k^2$. Finally,  $\varrho \in \{0.3, 0.5\}$ (parameter for range constrained implementations of SAP and GRM). All the graphs and numerical outcomes presented herein are averaged data from $10^2$ simulation runs.

Figures \ref{fig:RHTA} and \ref{fig:RHTAUNC} illustrate the evolution of the agents trajectories computed for the DTA problem at different time instants for the range constrained and the unconstrained cases, respectively. In particular, the dash lines indicate the current task assignments whereas the solid curves correspond to the past segments of the agents' trajectories. Fig. \ref{fig:teamUtil} shows that both the team utility obtained by the GRM and SAP negotiation protocols for the DTA problem reach the team utility attained by the solution to the OLTA problem. In addition, the negotiations converge to a pure strategy Nash equilibrium as $t \rightarrow t_\f$ in agreement with  Proposition~\ref{prop:mainresult}. 
Table \ref{tabl:teamUtil} shows the values of the total team utility $\cU$ for different scenarios for both the range constrained and range unconstrained cases with a significant number of agents and tasks and for different values of the terminal time $t_\f$. We observe that for the DTA problem the team's performance improves as $t_\f$ increases. As we have discussed in Remark~3, when $t_\f$ is large, then the equilibrium assignment profile corresponds to a ``steady-state'' profile in which case the performance achieved by the solutions to both the OLTA and DTA problems are expected to be similar. The obtained results confirm the latter claim.

\section{Concluding Remarks}\label{s:concl}

In this paper, we have presented a framework to address dynamic task allocation problems for multi-agent systems with state-dependent utilities. Our approach, which leverage game-theoretic learning algorithms for the solution of static potential games, offers a practical solution to a class of more realistic and challenging dynamic task allocation problems for autonomous mobile agents. In our future work, we plan to extend the results presented herein to even more realistic task allocation problems including scenarios with  deadlines attached to tasks, pop-up tasks and agents with varying capabilities and preferences.

\bibliographystyle{ieeetr}
\bibliography{bakolas,gamesref}
\end{document}